\newcommand\snip[1]{}
\newcommand{\indeg}{\textrm{indeg}}
\newcommand{\outdeg}{\textrm{outdeg}}
\newcommand{\leftw}{\textrm{left}}
\newcommand{\rightw}{\textrm{right}}
\renewcommand{\leq}{\leqslant} 
\renewcommand{\geq}{\geqslant}
\newcommand{\po}{\preceq}
\newcommand{\tightoverset}[2]{%
  \mathop{#2}\limits^{\vbox to -.5ex{\kern-0.6ex\hbox{$#1$}\vss}}}
\newcommand{\rhup}[1]{\tightoverset{\rightharpoonup}{#1}}
\def\cramped                           
\title{Bar 1-Visibility Graphs and their relation to other Nearly Planar Graphs\thanks{%
The research reported in this paper started at the 2013
McGill/INRIA/UVictoria Bellairs workshop. We gratefully acknowledge
discussions with the other participants.
Research supported by NSERC, and  by MIUR of Italy
under project AlgoDEEP prot. 2008TFBWL4.
}}
\author{
W. Evans\inst{1} \and
M. Kaufmann \inst{2} \and
W. Lenhart \inst{3}\and
G. Liotta\inst{4} \and
T. Mchedlidze\inst{5}\and
S. Wismath\inst{6}
}
\institute{
University of British Columbia, Canada
\and
Universit{\"a}t T{\"u}bingen, Germany
\and
Williams University, U.S.A.
\and
Universit{\'a} degli Studi di Perugia, Italy
\and
Karlsruhe Institute of Technology, Germany
\and
University of Lethbridge, Canada
}
\begin{document}

\pagenumbering{arabic}
\pagestyle{plain}

\maketitle

\begin{abstract}
A graph is called a \emph{strong (\emph{resp.} weak) bar 1-visibility graph} 
if its vertices can be represented as horizontal segments (bars) in the
plane so that its edges are all (resp. a subset of) the pairs of
vertices whose bars have a $\epsilon$-thick vertical line connecting
them that intersects at most one other bar.

We explore the relation among weak (resp. strong) bar 1-visibility graphs and other 
nearly planar graph classes.
In particular, we study their relation to
\emph{1-planar graphs},
which have a drawing with at most one crossing per edge;
\emph{quasi-planar graphs},
which have a drawing with no three mutually crossing edges;
the squares of \emph{planar 1-flow networks},
which are upward digraphs with in- or out-degree at most one.
Our main results are
that 1-planar graphs and the (undirected) squares of planar 1-flow
networks are weak bar 1-visibility graphs and that these are
quasi-planar graphs.
\end{abstract}

\section{Introduction}

Developing a theory of graph drawing beyond planarity has received
increasing interest in recent years. This is partly motivated by
applications of network visualization, where it is important to
compute readable drawings of non-planar graphs. Within this research
framework, a rich body of papers has in particular been devoted to
the study of the combinatorial properties of different types of
drawings that are nearly planar, i.e., do not allow a specific restricted set of crossing configurations, such as the crossings cannot form too sharp angles (see, e.g.,~\cite{dl-cargd-12} for a survey).
Another study of visualizations of non-planar graphs that are ``close to planar'' was conducted by Dean et al.~\cite{degl7}, by introducing so-called 
bar $k$-visibility graphs and representations. Dean et al. were particularly interested in measurements of closeness to planarity of
bar $k$-visibility graphs. In this work we shed some light on this question by investigating the relation of
bar $1$-visibility graphs with graphs that are known to be ``close to planar''.  Thus, we study the relation of bar $1$-visibility graphs
with nearly planar graphs, particularly 1-planar and quasi-planar graphs. Moreover, we investigate the  relation of bar $1$-visibility graphs with squares of planar graphs.


A \emph{bar layout} consists of $n$ horizontal
non-intersecting line segments (bars).  A pair of bars $u$ and $v$
are \emph{$k$-visible} if and only if there is an 
axis-aligned rectangle of non-zero width touching $u$ and $v$ which
intersects at most $k$ bars in the layout.
For a given bar layout, its (unique) \emph{strong bar $k$-visibility
graph} has a vertex for every bar and an edge $(u,v)$ if and only if
the corresponding bars $u$ and $v$ are $k$-visible.
A \emph{weak bar $k$-visibility graph} of a bar layout is any
(spanning) subgraph of its strong bar $k$-visibility graph.
Note that there are $2^m$ weak bar $k$-visibility graphs if
there are $m$ edges in the strong bar $k$-visibility graph.
A graph is a strong (weak) bar $k$-visibility graph if it is the
strong (weak) bar $k$-visibility graph of some bar layout.
Independently, Wismath~\cite{Wismath85} and Tamassia and
Tollis~\cite{tt86} characterized strong bar $0$-visibility graphs as
exactly those that have a planar embedding with all cut vertices on
the exterior face. Weak bar $0$-visibility graphs are exactly the
planar graphs~\cite{dett-gdavg-99}. Dean et al.~\cite{degl7} showed
that $K_n$ $(n \leq 8)$ is a strong bar $1$-visibility graph, that
$K_9$ is not a strong bar $1$-visibility graph, and that all
$n$-vertex strong (and thus weak) bar $1$-visibility graphs have fewer
than $6n-20$ edges.
Felsner and Massow~\cite{DBLP:journals/jgaa/FelsnerM08}
showed that there exists a strong bar $1$-visibility graph that has
thickness three, disproving an earlier conjecture~\cite{degl7} that
all such graphs have thickness two or less.

While bar layouts represent the vertices of a graph as horizontal
segments, a {\em topological drawing} of a graph $G$ maps each
vertex $u$ of $G$ to a distinct point $p_u$ in the plane, each edge
$(u,v)$ of $G$ to a Jordan arc connecting $p_u$ and $p_v$ and not
passing through any other vertex, and is such that any two edges
have at most one point in common. A \emph{$k$-planar graph} is one
which admits a topological drawing in which each edge is crossed by
at most $k$ other edges. Pach and T\'oth proved that $1$-planar
graphs with $n$ vertices have at most $4n-8$ edges, which is a
tight upper bound~\cite{PachT97} and
that, in general, $k$-planar graphs are sparse. Korzhik and Mohar
proved that recognizing $1$-planar graphs is
NP-hard~\cite{DBLP:conf/gd/KorzhikM08}. A limited list of additional
papers on $k$-planar graphs
includes~\cite{DBLP:conf/gd/AuerBGH12,DBLP:journals/dam/BorodinKRS01,DBLP:conf/gd/BrandenburgEGGHR12,DBLP:journals/dam/CzapH12,DBLP:journals/dam/EadesL13,DBLP:conf/gd/EadesHKLSS12,DBLP:journals/dm/FabriciM07,DBLP:conf/cocoon/HongELP12,DBLP:journals/dm/Korzhik08a,DBLP:journals/dm/Suzuki10,th-rdg-88}. The relation between $1$-planar and bar $1$-visibility graphs was recently investigated in~\cite{DBLP:journals/corr/abs-1302-4870,SultanaRRT13}, where it was proven that several restricted subclasses of $1$-planar graphs are weak bar $1$-visibility graphs.

A \emph{$k$-quasi-planar graph} admits a topological drawing such that
no $k$ edges mutually cross; $3$-quasi-planar graphs are commonly
called \emph{quasi-planar}, for short. Ackerman and Tardos showed that
quasi-planar graphs with $n$ vertices have at most $6.5 n - O(1)$
edges~\cite{Ackerman2007563}. Giacomo et
al.~\cite{DBLP:conf/wg/GiacomoDLM12} described how to construct linear
area $k$-quasi-planar drawings of graphs with bounded treewidth.
Recently, Geneson et al.~\cite{GenesonKT13} showed that all semi-bar
$k$-visibility graphs\footnote{Semi-bar visibility graphs require all horizontal bars to have minimum
$x$-coordinate equal to zero~\cite{DBLP:journals/jgaa/FelsnerM08}.}
are $(k+2)$-quasi-planar.
See also~\cite{AckermanE09,pachss96} for additional references about
$k$-quasi-planar graphs.

Another family of non-planar graphs, which are in some sense ``close to planar'' are the squares of
directed planar graphs with bounded in- or out- degree. The
\emph{square} $G^2$ of a graph $G = (V,E)$ has vertex set $V$ and all
edges $(u,v)$ where there is a path of length at most two from $u$ to
$v$ in $G$. Observe that if for each vertex of a directed planar graph
$G$, either in- or out- degree is bounded by a constant, then the
number of edges in $G^2$ is linear.
This fact is captured by the notion of $k$-flow networks. A
(\emph{planar}) \emph{$k$-flow network} is a (upward planar)
directed graph in which every vertex $v$ has $\min\{\indeg(v),
\outdeg(v)\} \leq k$. The name of the class stems from the fact that
at most $k$ units of flow can pass through each
vertex. Tarjan~\cite{tarjan1983data} studied $1$-flow networks under
the name of \emph{unit flow networks}. Bessy et al.~\cite{BessyHB06}
studied the arc-chromatic number of $k$-flow networks under the name of
\emph{$(k \vee k)$-digraphs}.
We let $k$-flow$^2$ denote the class of graphs that are the squares of
planar $k$-flow networks.  Squares of graphs arise naturally in
understanding bar $1$-visibility graphs since a bar layout that
represents a bar $0$-visibility graph $G$ also represents a family of
weak bar $1$-visibility graphs each of which is a spanning subgraph of
$G^2$. That is, every weak bar $1$-visibility graph is a spanning
subgraph of the square of a bar $0$-visibility graph. Thus, it is
natural to consider which bar $0$-visibility graphs have squares that
are weak bar $1$-visible.

While several properties of bar 1-visibility graphs have been investigated, it 
remains an open problem to provide their complete characterization.
Recall that bar 1-visibility graphs are generally non-planar and contain at most
$6n-20$ edges. Observe that this number is greater than the maximum number of edges in 
1-planar graphs (at most $4n-8$) and smaller than the maximum number of edges in quasi-planar graphs (at most  $6.5 n - O(1)$).
Recall also that, every weak bar $1$-visibility graph is a spanning
subgraph of the square of a bar $0$-visibility graph. Motivated by these facts we study the relation
of bar 1-visibility graphs with families of 1-planar, quasi-planar and squares of planar graphs.
Our contribution is threefold:
(i) We show that the class of weak bar $1$-visibility graphs contains
the class of $1$-planar graphs, which proves a conjecture of Sultana,
Rahman, Roy, and
Tairin~\cite{DBLP:journals/corr/abs-1302-4870,SultanaRRT13},
(ii) We show that the class of bar $1$-visibility graphs is contained
in the class of quasi-planar graphs, and
(iii) We show that $1$-flow$^2$ graphs are weak bar $1$-visibility
graphs, and that this is not always true for $2$-flow$^2$ graphs. An
overview of our results is illustrated in Figure~\ref{fig:world} and
thoroughly described in Section~\ref{se:overview}. Proof details about
the inclusion relationships of Figure~\ref{fig:world} are given in
Sections~\ref{se:1P_is_WeB1}, \ref{se:WeB1_is_QP}, and
\ref{se:F1sq_is_WeB1}.

We notice that proof of (i) was recently independently obtained by Brandenburg~\cite{Brandenburg13}.


\begin{figure}
\begin{center}
\includegraphics{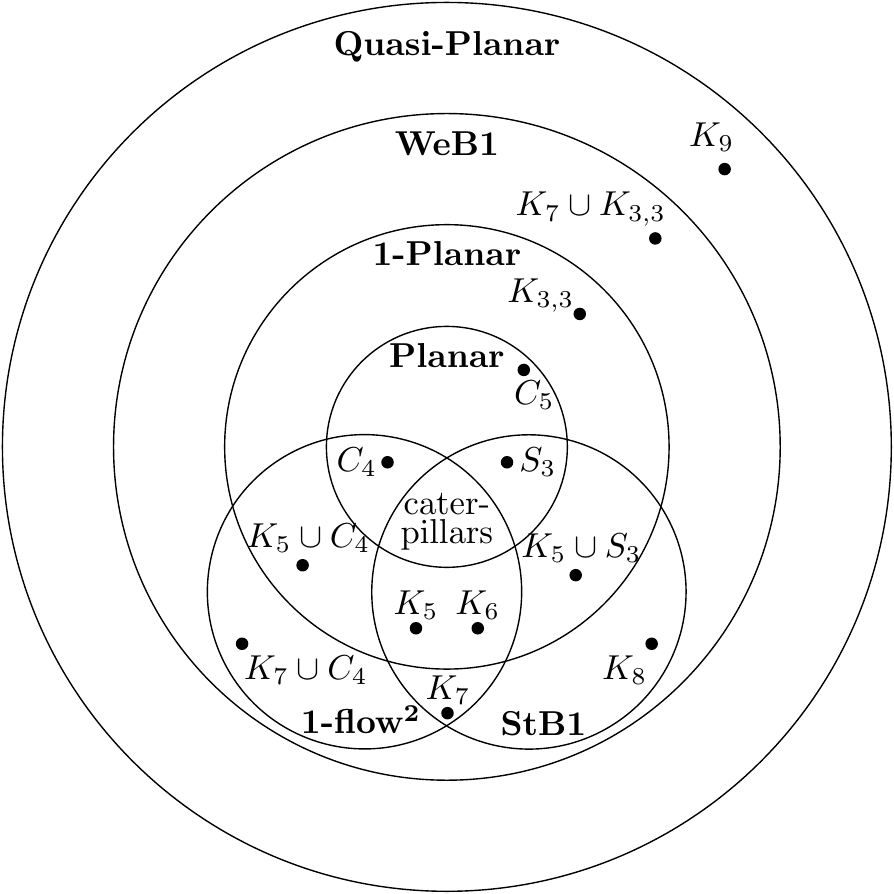}
\end{center}
\caption{Relationships among graph classes proved in this paper.}
\label{fig:world}
\end{figure}


\section{Graph classes and their relationships}\label{se:overview}

In this section we describe Figure~\ref{fig:world}. We abbreviate
strong and weak bar $1$-visibility graphs as \emph{StB1} and
\emph{WeB1} graphs. Since a strong bar $1$-visibility graph is a
weak bar $1$-visibility graph of the same bar layout, it follows
that $StB1 \subseteq WeB1$. The observation that every planar graph
is WeB1 (it is in fact a weak bar $0$-visibility
graph~\cite{dett-gdavg-99}); the fact that $K_{3,3}$ is WeB1 $\Big(
\vcenter{\hbox{\includegraphics{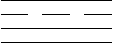}}} \Big)$;
and the following
simple lemma prove that $StB1 \subset WeB1$.

\begin{lemma}
\label{le:Knm}
Any graph that is StB1 is either a forest or contains a triangle.
\end{lemma}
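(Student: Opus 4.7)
The plan is to prove the contrapositive: if $G$ is StB1 and contains no triangle, then $G$ is a forest. The overall strategy is to reduce $G$ to a subgraph of a triangle-free interval graph.

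First, I would observe that every edge of $G$ must correspond to a $0$-visibility pair in the underlying bar layout. For if the $\epsilon$-thick vertical strip witnessing the $1$-visibility of an edge $(u,v)$ passed through an intervening bar $w$, then sub-strips of the same strip would witness $0$-visibility, and hence edges, of $(u,w)$ and $(v,w)$, forming a triangle $\{u,v,w\}$. Combined with the fact that every $0$-visible pair is trivially $1$-visible and thus an edge of $G$, this shows that $G$ is precisely the strong bar $0$-visibility graph of the layout.

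Next, I would establish that no $\epsilon$-thick vertical strip of the layout can contain three distinct bars. Suppose one did; pick three bars $a,b,c$ of the strip that are consecutive in $y$-order within it. Then $(a,b)$ and $(b,c)$ are $0$-visible through that strip, and $(a,c)$ is $1$-visible through $b$ in the same strip. All three pairs are thus edges of $G$, again producing a triangle.

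Now I would invoke Helly's theorem in one dimension: three intervals of the real line that pairwise intersect in open sets must share a common point. By the previous paragraph, no open $x$-range is shared by three bars, so no three bar $x$-ranges pairwise intersect, i.e., the intersection graph of the $x$-ranges of the bars is triangle-free. Since interval graphs are chordal, and a triangle-free chordal graph is a forest (a shortest cycle of length $\ge 4$ would admit a chord, producing a shorter cycle and eventually a triangle), this intersection graph is a forest.

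Finally, since each edge of $G$ requires its endpoints to share an open $x$-coordinate range, $G$ is a spanning subgraph of this interval graph and therefore a forest. The only delicate point is the second step, where configurations with several bar endpoints meeting at exactly the same $x$ must be handled; a generic perturbation of the bar layout does not alter $1$-visibility relations between pairs of bars and reduces matters to the argument above, so this is a technical rather than a substantive obstacle.
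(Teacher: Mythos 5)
Your proof is correct, but it takes a different route from the paper's. Both arguments ultimately rest on the same key observation --- three bars whose $x$-ranges share a common open interval force a triangle in the strong graph (adjacent pairs in $y$-order are $0$-visible, the outer pair is $1$-visible through the middle one) --- but they extract a contradiction from a cycle in different ways. The paper argues extremally and locally: it picks a cycle vertex $v$ whose bar has the leftmost right endpoint $x$, notes that the bars of $v$'s two cycle-neighbours must overlap $v$'s bar yet extend to the right of $x$ by minimality, and so all three bars contain $x$. Your argument is global: you show via one-dimensional Helly that the ``shared open $x$-range'' interval graph is triangle-free, hence (being chordal) a forest, and that $G$ is a spanning subgraph of it. These are morally cousins, since the standard proof of $1$-dimensional Helly is exactly the ``take the maximum of the left endpoints'' trick, but your version buys a cleaner structural statement (a triangle-free StB1 graph sits inside a triangle-free interval graph, and in fact all its edges are $0$-visibilities --- your second paragraph, which incidentally is not needed for the final step, since even a $1$-visibility rectangle of positive width already forces its two bars to share an open $x$-range). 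Both proofs share the same technicality, which you correctly flag: when selecting three $y$-consecutive bars over a common strip one must first shrink the strip past all bar endpoints so that every other bar either spans it or misses it entirely; this is routine.
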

\begin{proof}
Let $G$ be StB1 and suppose $G$ contains a cycle but not a triangle.
In the strong bar 1-visibility layout, let $v$ be a vertex in a cycle
whose bar has right endpoint with
minimum $x$-coordinate, $x$. Since $v$ has at least two neighbors that
are in a cycle,
their bars must share some $x$-coordinate with bar $v$ and all must
span $x$. Thus at least three bars span $x$ implying a
triangle in the graph, which is a contradiction.
\end{proof}


The number of edges in any 1-planar graph is known to be at most
$4n-8$~\cite{PachT97}. Thus, $K_7$ and $K_8$ are not 1-planar (too
many edges) but are StB1 as proved by Dean et al.~\cite{degl7}.
The disjoint union $K_7 \cup K_{3,3}$ is WeB1 but it is not 1-planar (because of $K_7$) and
it is not StB1 (because of $K_{3,3}$ by Lemma~\ref{le:Knm}). We
 show that all 1-planar graphs are
WeB1 (see Section~\ref{se:1P_is_WeB1}) and that all WeB1 graphs are
quasi-planar (see Section~\ref{se:WeB1_is_QP}).

In Section~\ref{se:F1sq_is_WeB1}, we show that
$1$-flow$^2$ graphs are WeB1. We also show that $2$-flow$^2$ graphs
are not always WeB1.
%
It is easy to see that if $G^2 \neq G$ then $G^2$ contains a triangle.
Thus, since $K_{3,3}$ is not planar and
does not contain a triangle, it is not a $1$-flow$^2$ graph.
However, every planar bipartite graph $G$ can be directed (from
one bipartition to the other) so that $G$ is a $1$-flow network with
$G^2 = G$ and is thus a $1$-flow$^2$ graph. Therefore, caterpillars and $C_4$ are
$1$-flow$^2$ graphs. It is also easy to see that caterpillars are StB1.
Let $G$ is the 1-flow graph of Figure~\ref{fig:square_is_k7}, then the square of the subgraph of $G$ induced 
by vertices $1,\dots,n$  is $K_n$ ($n \leq 7$). In Section~\ref{se:F1sq_is_WeB1} we show that $K_8$ is not the square of a $1$-flow network, and that there exists a planar StB1 graph ($S_3$) that is not the square of a
$1$-flow network.

\begin{figure}
\begin{center}
\includegraphics{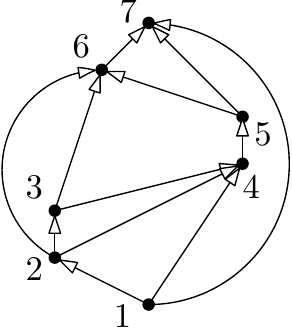}
\end{center}
\caption{$1$-flow graph $G$ such that the square of the subgraph of $G$ induced 
by vertices $1,\dots,n$  is $K_n$ ($n \leq 7$).}
\label{fig:square_is_k7}
\end{figure}

\section{1-planar graphs are WeB1}
\label{se:1P_is_WeB1}

\begin{theorem}
If a graph $G$ is 1-planar then $G$ is WeB1.
\label{thm:oneplanar}
\end{theorem}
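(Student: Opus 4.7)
The plan is to reduce to the case where $G$ is a \emph{maximal} 1-planar graph with a fixed 1-planar embedding $\Gamma$. This reduction is legitimate because WeB1 is closed under taking spanning subgraphs: any bar layout witnessing a graph also witnesses every spanning subgraph of its strong bar 1-visibility graph. In a maximal 1-planar embedding, each pair of crossing edges $(a,c)$ and $(b,d)$ is surrounded by a \emph{kite}: the four endpoints form a 4-cycle $abcd$ in $G$, and the crossing occurs in the interior of the quadrilateral bounded by the kite.

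Next, I would planarize $\Gamma$ by replacing every crossing point with a new degree-$4$ dummy vertex $w$ adjacent to $a,b,c,d$; call the resulting planar graph $H$, equipped with the plane embedding inherited from $\Gamma$. Augment $H$ by adding a source $s$ and sink $t$ on the outer face to obtain a planar st-graph, and compute the standard bar visibility representation. This yields a bar layout of all vertices of $H$ in which every edge of $H$ is realized by an unobstructed vertical strip.

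The key step is to ensure that both original crossing edges $(a,c)$ and $(b,d)$ are $1$-visible in the resulting layout. In any st-visibility representation, the edges incident to each dummy $w$ split, along the cyclic order $a,b,c,d$ around $w$, into a contiguous block of predecessors (bars below $w$) and a contiguous block of successors (bars above $w$). I would argue that the st-numbering can be chosen so that at every dummy this split is $\{a,b\}$ below and $\{c,d\}$ above (or a rotation thereof) -- that is, so that \emph{both} diagonals $ac$ and $bd$ of the kite cross $w$'s bar vertically. Under such a split, $w$'s bar is the unique obstruction on some vertical line between $a$'s and $c$'s bars, and likewise between $b$'s and $d$'s bars, so both crossing edges become $1$-visible. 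The main obstacle is realizing this alternating split at every dummy \emph{simultaneously}: I plan to handle it either by tailoring the global st-ordering so that at each kite the orientation is consistent with the diagonals, or, failing that, by locally perturbing each dummy bar within the feasibility region the bar-visibility algorithm provides (shifting or shortening it, and possibly subdividing the kite edges to give extra room).

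Finally, to obtain a layout whose bars correspond only to vertices of $G$, I delete the bar of every dummy $w$. Each edge of $G$ that was already an edge of $H$ retains its obstruction-free vertical strip. For each crossing pair, removing $w$'s bar eliminates the single obstruction between the bars of $a$ and $c$ (respectively $b$ and $d$), so both crossing edges become $0$-visible in the gap where $w$ used to be, and remain realized by disjoint vertical strips since we have two degrees of horizontal freedom in that region. The resulting bar layout therefore certifies that $G$ is a weak bar $1$-visibility graph, completing the proof.
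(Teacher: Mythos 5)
Your overall skeleton---reduce to maximal 1-planar graphs, planarize, build an $st$-based bar visibility representation, then delete dummy bars---is the right family of ideas, but you planarize \emph{symmetrically}, replacing each crossing by a single degree-4 dummy $w$ and trying to thread \emph{both} crossing edges through $w$'s bar. The step you yourself flag as ``the main obstacle'' is a genuine gap, not a technicality. You need every dummy $w$ to end up with $\indeg(w)=\outdeg(w)=2$ in the $st$-orientation, with the in-block separating $a$ from $c$ and $b$ from $d$; but there are perfectly legal $st$-orientations of the planarized graph in which a dummy has in-degree $1$ or $3$. For instance, orient the kite as $a\to b$, $b\to c$, $a\to d$, $d\to c$ and the spokes as $a\to w$, $w\to b$, $w\to c$, $w\to d$: every triangular face around $w$ still decomposes into two directed paths, so nothing in the $st$-planarity forces the balanced split, and with $b,c,d$ all above $w$ the edge $bd$ cannot be realized through $w$'s bar at all. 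You give no argument that a globally good $st$-numbering exists, and ``locally perturbing'' a bar cannot change which neighbors lie above versus below it. A second, independent problem: even with the balanced split, you need the visibility segments of $a\hbox{--}w\hbox{--}c$ and of $b\hbox{--}w\hbox{--}d$ to each be vertically aligned. The alignment theorem you would want (Theorem~4.4 of di Battista et al.) applies only to \emph{nonintersecting} paths, and your two paths cross at the common vertex $w$, so it does not apply; this would need a separate ad hoc geometric argument.

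The paper avoids both issues by treating the two crossing edges \emph{asymmetrically}. It deletes the crossing edges, takes an $st$-orientation of the remaining biconnected planar graph, and shows the kite 4-cycle $acbda$ decomposes into two directed paths, so some two consecutive kite edges agree in direction, say $a\preceq c\preceq b$. It then reinstates $cd$ as a genuine planar (hence $0$-visible) edge and replaces only $ab$ by a directed path $a,u,c,v,b$ through \emph{two} dummies hugging the real vertex $c$; consistency with the partial order is automatic by the choice of $c$, the inserted paths are pairwise nonintersecting so the alignment theorem applies, and deleting $u,v$ leaves $ab$ visible through the single real bar $c$. That choice of which edge to reroute, and through which kite corner, is precisely the content your proposal is missing; without it (or a proof that the balanced split is always simultaneously achievable), the argument does not go through.
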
 



\begin{proof}
It suffices to prove the theorem for a maximal 1-planar graph $G=(V,E)$ since
a WeB1-representation of $G$ is a
WeB1-representation of every graph $(V,E')$ with $E' \subseteq E$.
Let $\Gamma$ be a 1-planar drawing of $G$.
Let $ab$ and $cd$ be a pair of edges that cross in $\Gamma$.
Since $G$ is a maximal 1-planar graph, $G$ contains the edges $ac$,
$cb$, $bd$, and $da$; and these edges are uncrossed in $\Gamma$.
If, for example, $G$ did not contain the edge $ac$ or $ac$ was crossed,
we could \emph{re-route} $ac$ in $\Gamma$ without introducing crossings by
following edge $ab$ from $a$ to its intersection with $cd$ and then
following $cd$ to $c$; always following slightly to $c$-side of $ab$
and the $a$-side of $cd$.



Since $G$ is a maximal 1-planar graph, the planar graph $G_0$ obtained by
removing all crossing edges from $G$ is
biconnected~\cite{DBLP:conf/gd/EadesHKLSS12} and thus
has an $st$-orientation~\cite{lec-aptg-67},
which is a partial order, $\po$, on the vertices $V$ with a single
source (minimal vertex) and a single sink (maximal vertex).
We direct the edges of $G_0$ to be consistent with this partial order;
so $uv$ is directed as $\rhup{uv}$ if $u \po v$.
Let $\rhup{G_0}$ be the directed version of $G_0$, and let $\Gamma_0$ 
be the drawing $\Gamma$ restricted to $G_0$.

For every crossing pair of edges $ab$ and $cd$ in $G$, the
(undirected) cycle $C=acbda$ exists in $G_0$ since none of its edges
are crossed in $\Gamma$.
We claim that the oriented version, $\rhup{C}$, of
$C$ consists of two directed paths with common origin and common
destination.
This claim is a slight generalization of:
\begin{lemma}[Lemma~4.1~\cite{dett-gdavg-99}]
Each face $f$ of $\rhup{G_0}$ consists of two directed
paths with common origin and common destination.
\end{lemma}
In our case, $\rhup{C}$ may not be a face of $\rhup{G_0}$; it may
contain vertices and edges.
However, if our claim is violated, we can re-route the edges of the
cycle $C$ (as above) so that $\rhup{C}$ is a face of $\rhup{G_0}$ and
contradict the previous lemma.
Thus the claim holds and there must be two consecutive edges in $C$
that are oriented in the same direction, say $\rhup{ac}$ and
$\rhup{cb}$.
See for example Fig.~\ref{fig:orientCross}(a).



\begin{figure}
\centering
\includegraphics{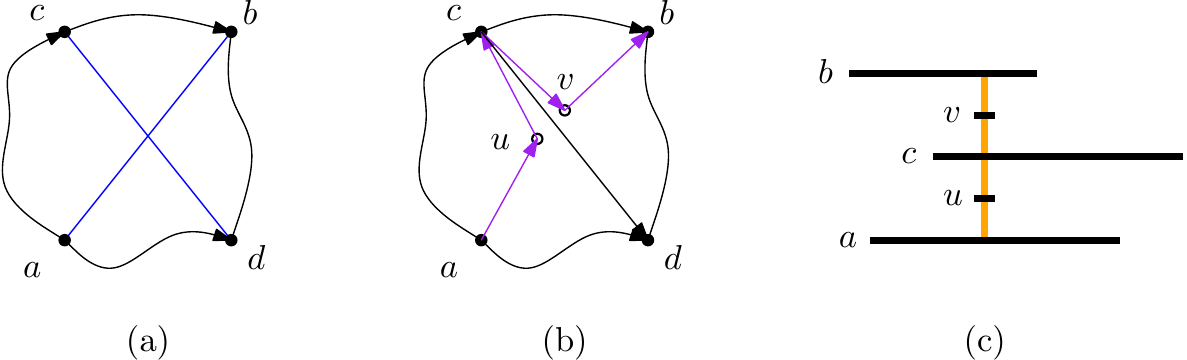}
\caption{ (a) At least two edges ($ac$ and $cb$) are oriented in the same
  direction around the cycle $C$.
(b) One edge ($ab$) in a pair of crossing edges is replaced
  with the path $aucvb$ by adding dummy
  vertices $u$ and $v$.
(c) The visibility edges of the path $aucvb$ are vertically
aligned. (Only these bars are shown.)
}
\label{fig:orientCross}
\end{figure}



We return the edge $cd$ to the drawing $\Gamma_0$ and direct it to be
consistent with the partial order, $\po$, defined by the
$st$-orientation.
In place of the edge $ab$, we insert the directed path
$aucvb$ that contains two dummy vertices, $u$ and $v$
(specifically for this crossing).
Note that, by the above discussion, this path is also consistent with
the partial order.
The dummy vertices are
placed near the point $x$ where $ab$ intersected $cd$, with edge $au$
following the drawing of $ab$ from $a$ to (near) $x$,
edge $uc$ slightly to the $a$-side of $cd$,
edge $cv$ slightly to the $b$-side of $cd$,
and edge $vb$ following the drawing of $ab$ from (near) $x$ to $b$.
See Fig.~\ref{fig:orientCross}(b).
Thus no new edge creates a crossing and the result, after every
pair of crossing edges is replaced in this fashion, is an
$st$-oriented plane graph $G'$ with drawing $\Gamma'$.
Since $G'$ is planar and has an $st$-orientation, $G'$ has a bar
0-visibility representation~\cite{Wismath85,tt86}.

The set of inserted paths are
\emph{nonintersecting}, meaning they are
edge disjoint and do not \emph{cross} at common vertices\footnote{%
Two paths \emph{cross} at a vertex $v$ in a drawing $\Gamma$ if $v$
has four incident edges $e_1$, $e_2$, $e_3$, and $e_4$ in clockwise
order such that one path contains $e_1$ and $e_3$ while the other
contains $e_2$ and $e_4$.}
in the drawing $\Gamma'$.
Thus, we may construct a bar 0-visibility representation so that for
each inserted path, $a,u,c,v,b$, the visibility
lines realizing the edges of the path are vertically
aligned (Theorem~4.4~\cite{dett-gdavg-99}).
If we remove the bars representing dummy vertices, the visibility
lines become a line of sight between $a$ and $b$ that is crossed only
by the bar representing vertex $c$.  It follows that the bar 0-visibility
representation, after removing all dummy bars, is a
weak bar 1-visibility representation of $G$.  See
Figure~\ref{fig:orientCross}(c).
\end{proof}

\section{WeB1 graphs are Quasi-planar}
\label{se:WeB1_is_QP}

\begin{theorem}
If a graph $G$ is WeB1, then $G$ is quasi-planar.
\label{thm:qp}
\end{theorem}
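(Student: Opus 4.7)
The plan is to build a topological drawing of $G$ directly from its WeB1 layout $L$ and to argue by a short tournament analysis on the crossings that no three edges in the drawing pairwise cross.

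First, for every vertex $v$, place its point $p_v=(\ell_v,y_v)$ at the left endpoint of bar $v$. For every edge $e=(u,v)$ with $y_u<y_v$ whose visibility strip lies at $x$-coordinate $x_e$, draw $e$ as an L-shaped (staircase) curve: horizontally along bar $u$ from $p_u$ to $(x_e,y_u)$, vertically from $(x_e,y_u)$ to $(x_e,y_v)$, and horizontally along bar $v$ from $(x_e,y_v)$ back to $p_v$. Since the strip is $\epsilon$-thick and pierces at most one intermediate bar, the vertical part of $e$ crosses at most one bar, namely the midvertex $w_e$ of $e$ (when it exists). Finally, perturb the horizontal segments incident to each bar by tiny monotonically-varying vertical offsets, ordered by strip $x$-coordinate, so that edges incident to a common vertex form a nested family of L-shapes and therefore do not cross one another.

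Next, characterize the crossings. Two vertical segments at distinct $x$-coordinates never meet and two horizontal segments at distinct (perturbed) $y$-coordinates never meet, so every crossing is between the vertical segment of some edge $e_i$ at $x_{e_i}$ and a horizontal segment of another edge $e_j$ running along a bar $w$. Since $w$ is an endpoint of $e_j$ (that is, $w\in\{u_j,v_j\}$) and the strip of $e_i$ must pierce $w$, we get $w=w_{e_i}$. The horizontal of $e_j$ on bar $w$ spans $x\in(\ell_w,x_{e_j})$, so the crossing forces $x_{e_i}<x_{e_j}$. Orient every crossing pair by $e_i\to e_j$. Since $e_i\to e_j$ and $e_j\to e_i$ would jointly yield $x_{e_i}<x_{e_j}<x_{e_i}$, each pair of edges crosses at most once.

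Now suppose for contradiction that three edges $e_1,e_2,e_3$ pairwise cross. Their orientations form a tournament on three vertices, either a $3$-cycle or transitive. If it is the cycle $e_1\to e_2\to e_3\to e_1$, then $x_{e_1}<x_{e_2}<x_{e_3}<x_{e_1}$, impossible. In the transitive case, the unique source $e_s$ (the edge of out-degree two) has midvertex $w_{e_s}$ that is simultaneously an endpoint of both of the other two edges; those two edges therefore share the endpoint $w_{e_s}$, and by the nested perturbation they do not cross, again a contradiction. Hence no three edges are pairwise crossing, and the constructed drawing witnesses that $G$ is quasi-planar.

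The main obstacle in a full write-up is to choose the perturbation carefully so that (i) all the midvertex crossings described above actually survive — this is routine once the $y$-offsets are chosen much smaller than the vertical gaps between bars — and (ii) pairs of edges sharing a common endpoint $v$ can really be routed as nested L-shapes with no crossing. The latter is handled by sorting the edges incident to $v$ by strip $x$-coordinate and pushing their horizontals off bar $v$ by monotonically varying offsets on the appropriate side, so that the edge that turns off the bar earliest stays outermost.
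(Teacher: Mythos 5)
There is a genuine gap in your transitive case. The step ``those two edges therefore share the endpoint $w_{e_s}$, and by the nested perturbation they do not cross'' is not justified: the nested perturbation only controls the two curves \emph{near the shared bar}, whereas your own crossing characterization allows two edges sharing an endpoint to cross near the \emph{other} endpoint of one of them, namely when the midvertex of one edge is the second endpoint of the other. Concretely, take bars $v=[0,10]$ at height $0$, $c=[2,3]$ at height $-1$, $d=[2,3]$ at height $1$, $b=[4,10]$ at height $2$, and $a=[5,6]$ at height $4$. Then $(c,d)$ is an edge with midvertex $v$, $(v,a)$ is an edge with midvertex $b$, and $(v,b)$ is a direct visibility. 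With, say, $x_{(c,d)}=2.5$, $x_{(v,a)}=5.5$, $x_{(v,b)}=7$, the vertical of $(c,d)$ crosses the horizontals of both $(v,a)$ and $(v,b)$ along bar $v$, and the vertical of $(v,a)$ pierces bar $b$ at $x=5.5\in(\ell_b,x_{(v,b)})$ and hence crosses the horizontal of $(v,b)$ along bar $b$ --- even though $(v,a)$ and $(v,b)$ share the endpoint $v$. So your drawing of this (planar!) graph has three mutually crossing edges, and the tournament argument collapses exactly where it relies on the false no-crossing claim.

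The missing idea is precisely what the paper's construction supplies: an edge realized only through a midvertex $v$ is routed \emph{around} the vertex point of $v$ (a six-segment curve through a point slightly left of $\ell(v)$, kept to the left of the two ``blue'' curves it subsumes), which guarantees that such an edge never crosses the two edges sharing its midvertex as an endpoint; your straight-through-the-bar routing provides no such guarantee. Your argument can probably be salvaged without changing the routing if you fix $x_e$ to be the \emph{leftmost} visibility strip of $e$ and add the observation that a strip of $e_j$ piercing bar $w$ contains a direct visibility between each endpoint of $e_j$ and $w$, so that $x_{(v,w_{e_j})}\leq x_{e_j}$, contradicting the requirement $x_{e_j}<x_{e_k}$ in the transitive case. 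But that choice and that observation are doing essential work and are absent from your write-up; as stated, the construction does not yield a quasi-planar drawing.
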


\begin{proof}
Let $R$ be a weak bar 1-visibility representation of $G=(V,E)$.
We show that the set of \emph{all} edges $E'$ realized by the
representation $R$ (i.e., the strong bar 1-visibility graph of $R$) forms a
quasi-planar graph.
Since $E$ is a subset of $E'$, $G$ is quasi-planar.

We construct a quasi-planar drawing, $Q$, from the bar representation
$R$ as follows.
In $Q$, place vertex $v$ at the left endpoint, $\ell(v)$, of the bar
representing $v$ in $R$.
The edges of $E'$ are in one of two classes.  Let $E'_0 \subseteq E'$
be the edges, called \emph{blue edges}, realized in $R$ by a direct
visibility between bars.  Let $E'_1 = E' - E'_0$ be
the remaining edges of $G'$, called \emph{red edges}, that is, those
that are only realized by a visibility through another bar.
For a blue edge $(u,v)$, with bar $u$ below bar $v$, draw a
polygonal curve in $Q$ consisting of three segments:
the middle segment is nearly identical to the
\emph{rightmost} vertical visibility
segment that connects bar $u$ with bar $v$,
but it
starts $\gamma$ (a small, positive value) above bar $u$,
ends $\gamma$ below bar $v$, and
is shifted $\gamma$ to the left.
The first and third segments connect $\ell(u)$
to the bottom of the middle segment and the top of the middle segment
to $\ell(v)$, respectively.
We choose $\gamma$ to be smaller than half the minimum positive
difference between bar $x$-coordinates and bar $y$-coordinates,
so a vertical middle segment from one edge does not intersect a (nearly)
horizontal first or third segment from another edge,
and a (nearly) horizontal segment from one edge does not intersect a
(nearly) horizontal segment from another edge.
Thus the curves representing blue edges do not cross.

\begin{figure}
\begin{center}
\includegraphics{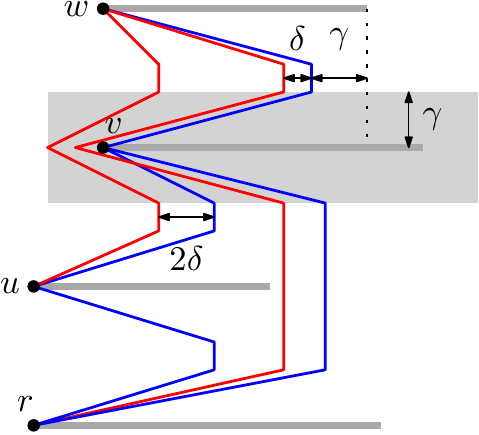}
\end{center}
\caption{Construction of the quasi-planar drawing $Q$.  The shaded
  region around bar $v$ contains no vertical edge segments.
  The values of
  $\gamma$ and $\delta$ in the figure are larger than what they would
  be in a true construction. }
\label{fig:routeRedBlue}
\end{figure}

For a red edge $(u,w)$, let $v$ be the bar that is crossed by the
\emph{rightmost} 1-visibility segment, $\sigma$, that connects bar $u$
with bar $w$.
We call $v$ the \emph{bypass vertex} for the red edge $(u,w)$.
Draw edge $(u,w)$ as a polygonal curve in $Q$ consisting of
six segments:
the first three connect $\ell(u)$ to $\ell(v)$ (as above) where the
middle segment lies $\gamma$ to the left of $\sigma$, and the last
three connect $\ell(v)$ to $\ell(w)$ (as above) where, again, the middle
segment lies $\gamma$ to the left of $\sigma$.
The edges $(u,v)$ and $(v,w)$ are in $E'_0$ and therefore have
polygonal curves in $Q$ that lie on or to the right of the curve for
$(u,w)$.  In order to prevent the curve for $(u,w)$ from intersecting
the curves for $(u,v)$ and $(v,w)$ (except at $\ell(u)$ and
$\ell(w)$), we shift all the points of the curve for $(u,w)$, except
$\ell(u)$ and $\ell(w)$, slightly to the left.
The amount of this shift depends on the red edges that
have $v$ as a bypass vertex.
If $k$ red edges with bypass vertex $v$ have 1-visibility segments to
the right of $\sigma$ then the shift is by $(k+1) \delta$, where
$\delta$ is a positive value that is smaller than $\gamma/|E'|^2$.
In this way, no two red edges with the same bypass vertex intersect,
and no two red edges that share an endpoint intersect.

Note that no vertical edge segments intersect the interior of the
region that is $L_{\infty}$-distance $\gamma$ from a bar, and all
(nearly) horizontal edge segments lie in such a region for some bar.
Thus all edge curve intersections occur within such regions.
See Figure~\ref{fig:routeRedBlue}.

Suppose that the drawing $Q$ is not quasi-planar.
Consider a triple of edges (edge curves) that mutually intersect in $Q$.
We claim that exactly one of these edges is blue.
Since no two blue edges intersect, at most one edge in the triple is blue.
Also, three red edges cannot mutually intersect 
since these edges can only intersect near one of their bypass vertices, call it $v$.
If two red edges share $v$ as a bypass vertex then they do not intersect.
Thus, two of the three red edges have $v$ as an endpoint and therefore
those two don't intersect.

Let $uv$ be the one blue edge in the triple of mutually
intersecting edges.
The intersection of a blue edge and a red edge must
occur near the bypass vertex of the red edge.
Since both red edges in the triple intersect edge $uv$,
they must have bypass vertices $u$ or $v$.
Because they intersect, they cannot share the same bypass vertex, and
one must have an endpoint at $u$ and the other an endpoint at $v$.
Thus the curves representing both red edges have three segments from
$u$ to $v$ and these segments lie to the left of the curve
representing the blue edge $uv$.
Thus neither intersects the blue edge, which is a contradiction.
\end{proof}

\section{Squares of planar 1-flow networks are WeB1}
\label{se:F1sq_is_WeB1}

An acyclic digraph is called \emph{upward planar} if it admits a planar drawing
where all edges are represented by curves monotonically increasing in a common direction.
An upward planar digraph with one source $s$ and one sink $t$, embedded so that $s$ and $t$ are
on the outer face, is called \emph{planar $st$-digraph}.

For a planar $st$-digraph $G = (V,E)$,
let $\leftw(v)$ (resp. $\rightw(v)$) denote the face of $G$ separating the
incoming from the outgoing edges in clockwise (resp. counterclockwise) order.
A \emph{topological numbering} of $G$ is an assignment of numbers to the
vertices of $G$, such that for every edge $(u,v)$, the number assigned
to $v$ is greater than the number assigned to $u$. The numbering is \emph{optimal} if the range of the
numbers assigned to the vertices is minimized.

Recall that a \emph{planar} \emph{$k$-flow network} is an upward planar 
digraph in which every vertex $v$ has $\min\{\indeg(v),
\outdeg(v)\} \leq k$. Recall also that $k$-flow$^2$ denote the class of graphs that are the squares of planar $k$-flow networks. 

As we already mentioned, a bar layout that represents a bar $0$-visibility graph $G$
also represents a family of weak bar $1$-visibility graphs each of
which is a spanning subgraph of $G^2$. In other words, every weak bar $1$-visibility graph is a spanning subgraph of the
square of a bar $0$-visibility graph. In the following we investigate the reverse question, thus, we investigate
which bar $0$-visibility graphs have squares that are weak bar $1$-visible.



\begin{theorem}
\label{th:1-flow}
The square of a planar 1-flow network is WeB1.
\end{theorem}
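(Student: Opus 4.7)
The plan is to augment $G$ to a planar $st$-digraph and construct a bar visibility representation in which every length-$2$ path through an $\indeg$-$1$ or $\outdeg$-$1$ vertex becomes vertically aligned.

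First, I would augment $G$ to a planar $st$-digraph $G'$ by adding a super-source $s$ with edges to every source of $G$ and a super-sink $t$ with edges from every sink of $G$. Only sources and sinks of $G$ have their in- or out-degree affected, and only by going from $0$ to $1$, so every original vertex of $G$ still satisfies $\min\{\indeg(v), \outdeg(v)\} \leq 1$ in $G'$. Then I would build a Tamassia--Tollis bar visibility layout of $G'$, in which each vertex becomes a horizontal bar at a $y$-coordinate equal to its topological number and each edge is realized by a vertical visibility segment, and delete the bars representing $s$ and $t$ to obtain a layout $L$ on the original vertex set of $G$.

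Next I would verify that $L$ realizes every edge of $G^2$ as a bar $1$-visibility. Edges of $G$ are already bar $0$-visible in $L$ by construction. For an edge $(u,w)$ arising from a length-$2$ path $u \to v \to w$ in $G$, the $1$-flow property says either $u$ is the unique predecessor of $v$ or $w$ is the unique successor; I would treat the first subcase (the other is symmetric). At the $x$-coordinate $c$ of the visibility segment realizing edge $(v,w)$, ascending from bar $v$ meets bar $w$ directly. I would then show that descending from bar $v$ at $c$ meets bar $u$ directly, using the fact that the entire $x$-range of bar $v$ sees bar $u$ from below whenever $v$ has only one predecessor. The vertical line at $c$ would then cross only bar $v$ between bars $u$ and $w$, giving the required bar $1$-visibility.

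The hard part will be the geometric claim that the full $x$-range of bar $v$ is downward-visible to bar $u$ whenever $\indeg(v) = 1$. This is not automatic in the vanilla Tamassia--Tollis layout, since the two faces $f_L, f_R$ adjacent to edge $(u,v)$ may have face sources strictly below $u$, so a point in $v$'s $x$-range off the edge $(u,v)$ could descend through a face region past $u$ and meet another bar first. I would address this by modifying the layout inductively in topological order: shrink each $\indeg$-$1$ vertex's $x$-range to lie inside its unique predecessor's range and, symmetrically, align each $\outdeg$-$1$ vertex's $x$-range inside its unique successor's range, while preserving planarity and the visibilities for edges of $G$. This ensures that along every length-$2$ path required by $G^2$, the middle bar is nested between the two endpoint bars at a common $x$-coordinate, completing the proof.
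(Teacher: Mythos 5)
Your overall strategy mirrors the paper's: augment to a planar $st$-digraph, take a Tamassia--Tollis bar visibility layout, and argue that every length-two path through an in-degree-$1$ (or out-degree-$1$) vertex is realized as a $1$-visibility. However, there are two genuine gaps. First, the augmentation as you describe it --- a super-source joined to \emph{every} source and a super-sink joined from \emph{every} sink --- does not preserve planarity when $G$ has sources or sinks incident only to interior faces; such a vertex cannot be connected to a new vertex on the outer face without crossings. The correct augmentation (Lemma~\ref{lemma:st_augmentation} in the paper) cancels each interior sink by adding an edge to a sink of a face containing it, and one must then \emph{check} that each added edge keeps the graph a $1$-flow network (it does, because the target either is itself a sink or already has two incoming edges). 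Preserving the $1$-flow property under augmentation is not optional: your visibility argument uses that $(u,v)$ is still the \emph{unique} incoming edge of $v$ in the augmented graph, and a careless augmentation could destroy that.

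Second, you correctly isolate the crucial geometric claim --- that when $\indeg(v)=1$ with predecessor $u$, the whole $x$-range of bar $v$ sees bar $u$ from below --- but you do not prove it; you instead propose an unspecified inductive modification of the layout whose planarity- and visibility-preservation you also do not establish. Moreover, your stated reason for needing the modification is mistaken: in the vanilla Tamassia--Tollis layout of an $st$-digraph the claim \emph{is} automatic. Since $(u,v)$ is the only incoming edge of $v$, one has $\chi(\leftw(u)) \leq \chi(\leftw(v))$ and $\chi(\rightw(v)) \leq \chi(\rightw(u))$, so bar $u$ horizontally contains bar $v$; and by the standard dual-path dichotomy (Lemma~4.3 of~\cite{dett-gdavg-99}), any third vertex $w$ either lies above $v$, lies below $u$ (because every directed path into $v$ passes through $u$), or is horizontally disjoint from bar $v$. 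Hence no bar can intervene, and no layout surgery is needed. As written, your proposal replaces the one step that actually requires an argument with a sketch whose correctness is itself unverified, so the proof is not complete.
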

\begin{proof}
Let $G'$ be a planar 1-flow network and $G$ be a planar $st$-digraph for which $G'$
is a spanning subgraph.
We will prove in Lemma~\ref{lemma:st_augmentation} that such $G$ exists.
The argument is a slight modification of the 
method used to prove Theorem~6.1~\cite{dett-gdavg-99}.

\begin{lemma}
\label{lemma:st_augmentation}
Any planar 1-flow network is a spanning subgraph of an $st$-digraph that is
also a 1-flow network. 
\end{lemma}
\begin{proof}
Let $G'$ be a 1-flow network, i.e., an upward planar digraph with
$\min \{\indeg(v), \outdeg(v) \} \leq 1$, for each vertex $v$.  We add
edges to $G'$ to make it a planar 1-flow network $G$, with a unique source
and a unique sink.  
For an upward planar drawing $\Gamma'$ of $G'$, let
$t_1,\dots,t_k$ (resp. $s_1,\dots,s_f$) be the sinks (resp. sources)
of $G'$ that are on the outer face, where $t_1$ (resp. $s_1$) has the
largest (resp. smallest) $y$-coordinate (see Figure~\ref{fig:cancel_sinks}). 
Add an edge from each of $t_2, \dots, t_k$ to $t_1$ and from $s_1$ to
each of $s_2, \dots, s_f$ so that the resulting drawing $\Gamma''$ is
planar.
Call the new planar 1-flow network $G''$.

\begin{figure}
\centering
\includegraphics[scale=0.7]{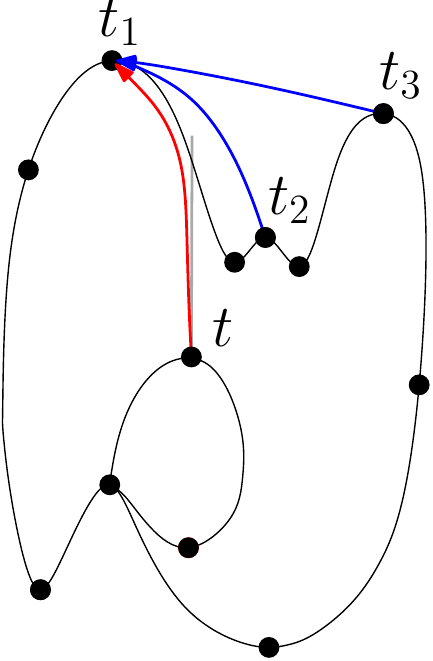}
\caption{Illustration for the proof of Lemma~\ref{lemma:st_augmentation}. Blue edges cancel sinks on the outer face. Red edge cancels a sink of an inner face.}
\label{fig:cancel_sinks}
\end{figure}

Let $t$ be a sink of $G''$.
Consider a vertical half-line $\ell$, originating at $t$ to
$+\infty$. If $t\neq t_1$, half-line $\ell$ crosses a boundary of an
interior face $f$ of $\Gamma''$ that contains $t$,
since otherwise $t$ would have been on the outer face of $\Gamma'$ and
would not be a sink in $G''$ (the edge $(t, t_1)$ would be in $G''$).
We follow half-line $\ell$ and the boundary of face $f$ upward until we reach a sink $t'$ of the
face and add an edge $(t,t')$ to $G''$. Vertex $t'$
either has no outgoing edge, i.e., is a sink of $G''$, or already has
two incoming edges. Thus, the addition of $(t,t')$ keeps $G''$ a
1-flow network. Moreover, edge $(t,t')$ does not create any crossing and keeps the graph upward, therefore
after this step $G''$ is still a planar 1-flow network. The step cancels a sink of $G''$. We repeat this step
until no other sink except for $t_1$ remains.
We perform a symmetric procedure for the remaining sources. The
resulting graph $G$ is a planar 1-flow network. Since only edges have
been added, $G'$ is a spanning subgraph of $G$.
\end{proof}

\begin{figure}
\centering
\subfigure[]{\label{fig:TTcase1}\includegraphics[scale=1]{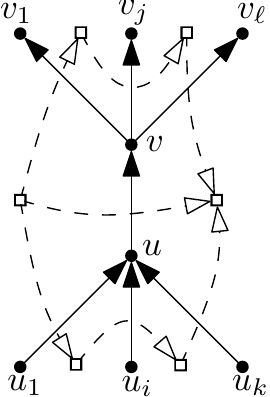}}
\hfill
\subfigure[]{\label{fig:TTcase2}\includegraphics[scale=1]{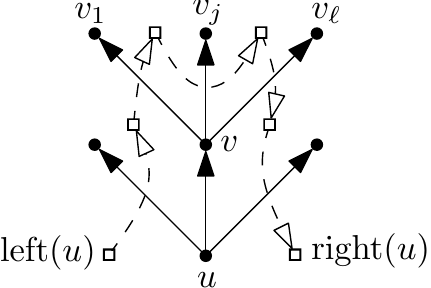}}
\hfill
\subfigure{\label{fig:TT}\includegraphics[scale=1]{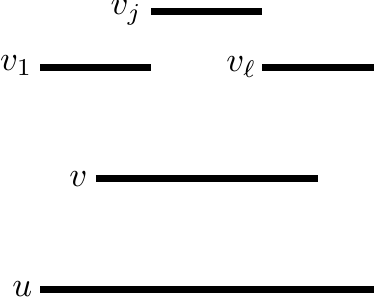}}
\caption{Illustration for the proof of Theorem~\ref{th:1-flow}}
\label{fig:TT_algorithm}
\end{figure}

We come back to the proof of the theorem. In the following we show that the bar 0-visibility representation
$\Gamma$ of $G$ produced by the algorithm of
Tamassia and Tollis~\cite{tt86} is a WeB1 visibility representation of
$G^2$. Since $G'$ is a spanning subgraph of $G$, $G'^2$ is a spanning subgraph of $G^2$, and therefore $\Gamma$ is a WeB1 visibility representation of $G'^2$.
We first review the construction of $\Gamma$.
Let $G^\star$ be the dual of $G$, where each of $G^\star$ is directed so that it crosses the corresponding edge of $G$ from its left to its right. It is easy to see that $G^\star$ is a planar $st$-digraph~\cite{dett-gdavg-99}. Let $\psi$ and $\chi$ be the functions
that assign an optimal topological numbering to the vertices
of $G$ and $G^\star$, respectively. In $\Gamma$, vertex $v$ is
represented as a horizontal bar at $y$-coordinate $\psi(v)$ and with
end-points at $x$-coordinates $\chi(\leftw(v))$ and
$\chi(\rightw(v))-1$. We show that each edge of $G^2$ of the form
$(u,w)$, such that $(u,v), (v,w)\in G$, exists in $\Gamma$ and is represented
by a vertical line crossing only one vertex $v$.
Assume that $v$ has one incoming and several outgoing edges. The case
when $v$ has one outgoing and several incoming edges can be proven
symmetrically. Let $(u,v)$ be the only incoming edge of $v$.
If edge $(u,v)$ is the only outgoing edge of $u$
(Figure~\ref{fig:TTcase1}), $\chi(\leftw(u))=\chi(\leftw(v))$ and
$\chi(\rightw(u))=\chi(\rightw(v))$. Therefore $u$ and $v$ are
represented in $\Gamma$ as two bars with the same left and right
ends. If vertex $u$ has more outgoing edges
(Figure~\ref{fig:TTcase2}), $\chi(\leftw(u))<\chi(\leftw(v))$ and
$\chi(\rightw(v))< \chi(\rightw(u))$. Thus generally it holds that
$\chi(\leftw(u)) \leq \chi(\leftw(v))$ and $\chi(\rightw(v)) \leq
\chi(\rightw(u))$ (see Figure~\ref{fig:TT}) and any vertical line that
intersects bar $v$ also intersects bar $u$.
Thus, if $v_1,\dots,v_{\ell}$ are the remaining neighbors of $v$, any
vertical line that represents an edge from $v$ to $v_j$, also
crosses $u$, for any $1 \leq j \leq \ell$. It remains to show that
there is no bar in $\Gamma$ between $u$ and $v$ crossed by such a vertical
line. Let $w$ be a vertex different from $u$ and $v$. By
Lemma~4.3~\cite{dett-gdavg-99},
exactly one of the following directed paths exists:
\begin{inparaenum}[(1)]
\item from $v$ to $w$ in $G$,
\item from $w$ to $v$ in $G$,
\item from $\rightw(v)$ to $\leftw(w)$ in $G^\star$, or
\item from $\rightw(w)$ to $\leftw(v)$ in $G^\star$.
\end{inparaenum}
The first case implies that $\psi(v) < \psi(w)$ and therefore $w$ is
above $v$ in $\Gamma$. The second case implies that the path from $w$
to $v$ passes through $u$, since $(u,v)$ is the only incoming edge to
$v$. Therefore $\psi(w) < \psi(u)$ and $w$ lies below $u$. In the
third case, $\chi(\rightw(v)) < \chi(\leftw(w))$ and, in the  fourth
case, $\chi(\rightw(w))< \chi(\leftw(v))$. Thus, there is no vertex
$w$, that prevents edges $(u,v_j)$, $1\leq j \leq \ell$, to exist in
$\Gamma$.
\end{proof}


\subsection{Limitations on the squares of planar 2-flow networks}

We show that
while the squares of planar 1-flow networks are WeB1, the squares of
some planar 2-flow networks are not.

\begin{theorem}
\label{th:2-flow}
There exists a planar 2-flow network whose square is not WeB1.
\end{theorem}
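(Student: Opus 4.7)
The proof exhibits an explicit planar $2$-flow network $G$ whose square $G^2$ fails to be WeB1. My plan is to leverage the edge-count bound cited in the introduction (Dean et al.~\cite{degl7}): every WeB1 graph on $n$ vertices has fewer than $6n-20$ edges. In particular, $K_9$ has $\binom{9}{2}=36$ edges while $6\cdot 9 - 20 = 34$, so $K_9$ is not WeB1. Since WeB1 is closed under induced subgraphs (a witnessing bar layout restricts to a witnessing bar layout on any vertex subset, simply by deleting the bars of the removed vertices), it suffices to construct a planar $2$-flow network $G$ whose square contains an induced $K_9$.

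To build $G$, I would take a planar arrangement of nine target vertices $v_1, \ldots, v_9$ together with several $2$-flow helper vertices. Each helper is designed to have indegree $2$ (from two specified targets) and a fan of out-neighbors among the remaining targets, so that the length-two directed paths passing through the helper realize many of the undirected pair-edges $\{v_i,v_j\}$ in $G^2$. The $2$-flow condition is what makes this efficient: a helper with two in-neighbors simultaneously mediates pair-connections emanating from each of them, roughly doubling the yield compared with the $1$-flow construction of Figure~\ref{fig:square_is_k7}. Pairs that no helper can mediate would be realized by a small planar set of direct target-target edges included in $G$.

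The main obstacle is the three-way balance between planarity, the constraint $\min(\indeg,\outdeg)\leq 2$ at every vertex, and coverage of all $\binom{9}{2}=36$ target pairs. The $2$-flow condition forbids a single ``universal'' helper (which would need both indegree and outdegree at least $8$), and planarity caps the total number of helper-target edges at roughly $2(9+h)-4$ for $h$ helpers. Consequently, the construction must distribute helpers around the outer face of a planar drawing of the nine targets, with each helper responsible for a ``sector'' of pair-connections, and interleave them with a carefully chosen planar set of direct target-target edges. A case analysis showing that every target pair is realized either directly or by some length-two directed path through a helper, together with a planarity check and a per-vertex verification of the $2$-flow condition, then yields the induced $K_9$ in $G^2$ and, by the edge-count bound, establishes the theorem.
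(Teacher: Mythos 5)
Your reduction to the $6n-20$ edge bound of Dean et al.\ is the right certificate of non-WeB1-ness, and your observations that $K_9$ violates it and that WeB1 is closed under taking induced subgraphs (deleting bars only adds visibilities) are both correct. The problem is that the proof never actually happens: everything rests on exhibiting a planar $2$-flow network whose square contains $K_9$, and you only describe a plan for such a construction while explicitly conceding that ``the main obstacle is the three-way balance between planarity, the constraint $\min(\indeg,\outdeg)\leq 2$, and coverage of all $36$ target pairs.'' That obstacle is the entire content of the claim, and it is not resolved. Note also that the nine target vertices would have to be totally ordered by reachability in the DAG (a length-$\le 2$ directed path between $u$ and $v$ forces comparability), and a helper with in-neighbours $\{a,b\}$ mediates pairs $(a,s)$ and $(b,s)$ but never the pair $\{a,b\}$ itself; once you start threading several such fans around a totally ordered spine, the fans separate the plane and block one another. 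The paper's own Lemma~\ref{le:K8_not_1flow2} shows that already $K_8$ cannot be the square of a $1$-flow network precisely because of this kind of planarity obstruction, so the existence of your $K_9$ gadget is genuinely in doubt, not merely tedious to verify.

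The paper avoids the need for any dense local gadget by going asymptotic: it takes a $45^\circ$-rotated $\sqrt{n}\times\sqrt{n}$ grid with diagonals in alternate rows (Figure~\ref{fig:Flow3_example}), checks it is a $2$-flow network, and observes that the $(\sqrt{n}-2)^2$ interior vertices each acquire out-degree $6$ or $7$ in the square, at least half of them $7$. This gives more than $\tfrac{13}{2}(\sqrt{n}-2)^2$ edges in $G^2$, which exceeds $6n-20$ for large $n$ because the density constant $\tfrac{13}{2}$ beats $6$. If you want to salvage your approach, you must either produce and verify an explicit planar $2$-flow network with $K_9$ in its square (planar embedding, per-vertex degree check, and coverage of all $36$ pairs), or switch to a global density argument of the paper's kind, where no single vertex needs to see eight others.
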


\begin{proof}
Consider the graph $G$ of Figure~\ref{fig:Flow3_example} oriented
upward. It consists of a $\sqrt{n} \times \sqrt{n}$ grid, rotated by
$45^\circ$. The diagonals are present only in odd rows. Thus, $G$ is a
2-flow network.
Each vertex has out-degree in $G^2$ indicated by its label
in Figure~\ref{fig:Flow3_example}.
Consider the $(\sqrt{n}-2)^2$ vertices that are distance at
least two from the upper boundary vertices in $G$.
At least half of these vertices have out-degree 7 and the others have
out-degree 6.
Thus $G^2$ has more than $\frac{13}{2}(\sqrt{n} - 2)^2$ edges, which
exceeds the upper bound of $6n-20$ on the number of edges in a
WeB1 graph~\cite{degl7}, for sufficiently large $n$.
\end{proof}

\begin{figure}
\centering
\includegraphics[scale=0.7]{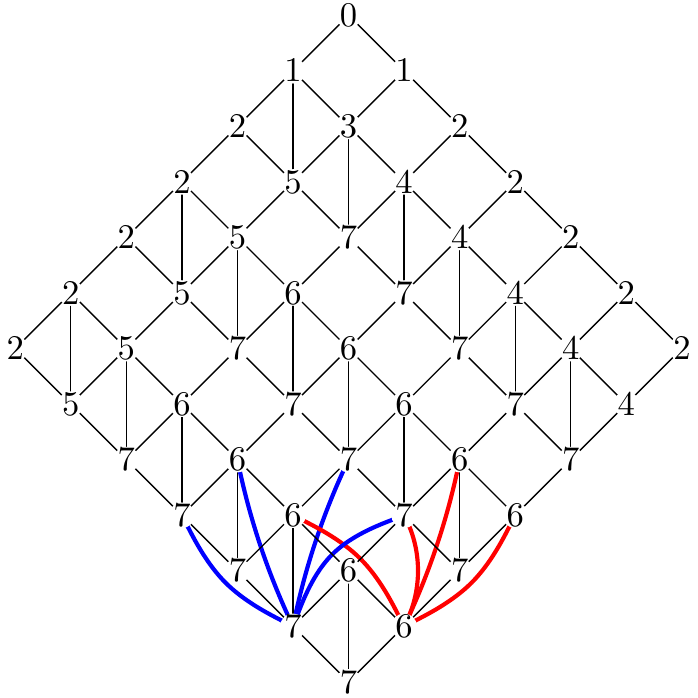}
\caption{Illustration for the proof of Theorem~\ref{th:2-flow}.}
\label{fig:Flow3_example}
\end{figure}

\subsection{Examples for different graph classes related to squares of planar 1-flow networks}

The following two lemmata  introduce examples of
graphs that distinguish certain graph classes in Figure~\ref{fig:world}.

\begin{lemma}
\label{le:K8_not_1flow2}
$K_8$ is not the square of a 1-flow network.
\end{lemma}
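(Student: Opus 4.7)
My plan is to derive a contradiction from the assumption that $K_8=G^2$ for some planar $1$-flow network $G$ (necessarily on eight vertices). Because every pair of vertices of $K_8$ is an edge, every pair of vertices of $G$ is comparable in the transitive closure of $G$, so $G$ has a unique topological order $v_1\prec v_2\prec\cdots\prec v_8$. For each $i$ the pair $\{v_i,v_{i+1}\}$ must be joined in $G$ by a directed path of length at most two, but no vertex lies strictly between $v_i$ and $v_{i+1}$ in the order, so the only way to realise it is by the direct edge $v_i\to v_{i+1}$. Hence $G$ contains the Hamiltonian path $v_1\to v_2\to\cdots\to v_8$, and every other edge is a \emph{shortcut} $(a,b)$ with $b-a\geq 2$.

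Next I would exploit the $1$-flow condition to classify inner vertices. Each inner $v_k$ ($2\leq k\leq 7$) already has both an incoming and an outgoing Ham edge, so $\min\{\indeg(v_k),\outdeg(v_k)\}\leq 1$ forces $v_k$ to be a \emph{sender} (no incoming shortcut), a \emph{receiver} (no outgoing shortcut), or a \emph{chain} (no shortcuts); moreover, $v_2$ cannot be a receiver nor $v_7$ a sender. For a pair $\{v_i,v_j\}$ with $j-i\geq 3$ to be an edge of $G^2$, any covering $2$-path $v_i\to v_k\to v_j$ must combine exactly one Ham edge with one shortcut (two shortcuts would make $v_k$ simultaneously sender and receiver), so the pair is covered only by the direct shortcut $(i,j)$, by $(i{+}1,j)$ with $v_{i+1}$ a sender, or by $(i,j{-}1)$ with $v_{j-1}$ a receiver. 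The cover conditions for the pairs $(1,j)$ with $j\in\{4,5,6,7\}$ then force the inner receivers to be a vertex cover of the path $v_3v_4v_5v_6v_7$ (unique minimum $\{v_4,v_6\}$), and symmetrically the pairs $(i,8)$ with $i\in\{2,3,4,5\}$ force the inner senders to cover $v_2v_3v_4v_5v_6$ (unique minimum $\{v_3,v_5\}$). A short case analysis using the pairs $(2,5)$ and $(4,7)$ then forces $v_2$ to be a sender and $v_7$ a receiver, and no further enlargement is compatible with sender/receiver disjointness, so the assignment is uniquely $S=\{v_1,v_2,v_3,v_5\}$, $R=\{v_4,v_6,v_7,v_8\}$. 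Under this assignment each of the pairs $(1,5),(2,5),(3,6),(4,7),(4,8)$ has a single allowable covering chord, so the shortcuts $(1,4),(2,4),(3,6),(5,7),(5,8)$ are all forced into $E(G)$.

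The final contradiction comes from planarity: a planar graph containing a Hamiltonian path admits a two-page book embedding with that path as the spine, so its shortcuts partition into two sets (``top'' and ``bottom'' page), each of which is non-crossing. The forced chord $(3,6)$ crosses each of $(1,4),(2,4),(5,7),(5,8)$, so these four chords must lie on the page opposite $(3,6)$. The pair $(3,8)$ is covered only by $(3,7)$ or $(3,8)$, each of which crosses both $(1,4)$ and $(2,4)$, so whichever of these covers $(3,8)$ must share a page with $(3,6)$. By the same argument the pair $(1,6)$ is covered only by $(1,6)$ or $(2,6)$, each of which crosses $(5,7)$ and $(5,8)$, and so the chord for $(1,6)$ must also land on the page of $(3,6)$. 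But every chord in $\{(1,6),(2,6)\}$ crosses every chord in $\{(3,7),(3,8)\}$, so the two chords picked to cover $(1,6)$ and $(3,8)$ end up on the same page yet still cross, violating the non-crossing property of each page. The main obstacle is the uniqueness argument for the sender/receiver assignment, which requires chasing the covering constraints through a careful case check; once that is done, the planarity obstruction follows immediately from the crossing pattern of the forced chords.
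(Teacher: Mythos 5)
Your structural analysis up to the forced chords is sound and runs closely parallel to the paper's own argument: the unique total order, the forced Hamiltonian path $v_1\to\cdots\to v_8$, the classification of inner vertices (the paper phrases your sender/receiver outcome as $\indeg(v_2)=\indeg(v_3)=\indeg(v_5)=1$ and $\outdeg(v_4)=\outdeg(v_6)=\outdeg(v_7)=1$), and the same five forced chords. Your vertex-cover bookkeeping is a clean way to organize the case analysis, and you even handle a point the paper glosses over (the pair $(1,6)$ may be covered by $(2,6)$ rather than $(1,6)$). One local issue: the pairs $(2,5)$ and $(4,7)$ do \emph{not} by themselves rule out the mirror assignment with $\{v_2,v_4,v_6\}$ senders and $\{v_3,v_5,v_7\}$ receivers, since in that assignment both pairs are coverable by their direct chords. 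To kill that case you must invoke the pair $(3,6)$ --- none of $(3,6)$, $(4,6)$, $(3,5)$ is then admissible --- or simply appeal to the $i\mapsto 9-i$ reversal symmetry and treat one case without loss of generality.

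The genuine gap is the lemma you use for the final contradiction: it is \emph{false} that every planar graph with a Hamiltonian path admits a two-page book embedding with that path as the spine. A counterexample is $K_{3,3}$ minus an edge: with parts $\{1,3,5\}$, $\{2,4,6\}$ and the edge $\{1,6\}$ deleted, the sequence $1,2,3,4,5,6$ is a Hamiltonian path and the three remaining chords $(1,4)$, $(2,5)$, $(3,6)$ pairwise cross in the linear layout, so no two-page assignment exists even though the graph is planar. What rescues your argument is upward planarity, which you never invoke at this step: $G$ is an upward planar DAG with unique source $v_1$ and unique sink $v_8$, so it admits a planar embedding with $v_1$ and $v_8$ on a common face; adding the closing edge $v_8v_1$ in that face preserves planarity, produces a Hamiltonian cycle, and the classical inside/outside partition of the chords then yields the two pages with your spine order. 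With that repair (and the fix to the uniqueness step above), your crossing analysis of the forced chords is correct and does deliver the contradiction. For comparison, the paper concludes instead by exhibiting a subdivision of $K_{3,3}$ on $\{v_1,v_3,v_5\}$ and $\{v_4,v_6,v_8\}$ or $\{v_4,v_6,v_7\}$; your book-embedding route is a genuinely different and arguably more systematic way to extract the planarity contradiction, but only once the spine lemma is justified via upward planarity rather than planarity alone.
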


\begin{proof}
Suppose $G=(V,E)$ is a 1-flow network such that $G^2 = K_8$.
First, if we view $G$ as a partial order, $\po$, it must be a total
order otherwise two vertices $u \not\po v$ would not be connected in
$G^2$.
We number the vertices $v_1 \po v_2 \po \dots \po v_8$ according to
the total order so that $(v_i, v_{i+1}) \in E$, for all $1 \leq i \leq
8$.
If there exists $3 \leq i \leq 6$ such that $\indeg(v_i) =
\indeg(v_{i+1}) = 1$ then $G^2$ cannot contain the edge $(v_1,
v_{i+1})$.
If there exists $3 \leq i \leq 6$ such that $\outdeg(v_i) =
\outdeg(v_{i+1}) = 1$ then $G^2$ cannot contain the edge $(v_i,
v_8)$.
Also if $\outdeg(v_3) = 1$  and $\indeg(v_6)=1$ then $G^2$ cannot
contain the edge $(v_3, v_6)$.
So $\indeg(v_2)=\indeg(v_3)=\indeg(v_5)=1$ and
$\outdeg(v_4)=\outdeg(v_6)=\outdeg(v_7)=1$.
Thus
$(v_1,v_5) \in G^2$ implies $(v_1,v_4) \in G$;
$(v_2,v_5) \in G^2$ implies $(v_2,v_4) \in G$;
$(v_4,v_7) \in G^2$ implies $(v_5,v_7) \in G$;
$(v_4,v_8) \in G^2$ implies $(v_5,v_8) \in G$;
$(v_3,v_6) \in G^2$ implies $(v_3,v_6) \in G$;
and
$(v_1,v_6) \in G^2$ implies $(v_1,v_6) \in G$.
Also
$(v_3,v_8) \in G^2$ implies $(v_3,v_7) \in G$ or $(v_3,v_8) \in G$;
and
$(v_1,v_8) \in G^2$ implies $(v_1,v_7) \in G$ or $(v_1,v_8) \in G$.
Each of these four possibilities yields a non-planar $G$ since in each
case $\{v_1,v_3,v_5\}$ and either $\{v_4,v_6,v_8\}$ or $\{v_4,v_6,v_7\}$
form a subdivision of $K_{3,3}$ in $G$.
\end{proof}

Let $S_3$ denote the graph consisting of a cycle of length $6$ with an inscribed triangle (Figure~\ref{fig:s3}.a).

\begin{lemma}
\label{lemma_s3}
$S_3$ is a planar StB1 graph and is not the square of a 1-flow network.
\end{lemma}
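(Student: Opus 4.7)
The lemma splits into two independent claims. Planarity of $S_3$ is immediate: draw it as a hexagon $1$-$2$-$3$-$4$-$5$-$6$ with an inscribed triangle on vertices $\{1,3,5\}$. For the StB1 part I would exhibit an explicit bar layout: place the three inner-triangle vertices $1$, $3$, $5$ as three long horizontal bars at three different heights, sharing (most of) a common long $x$-range, and place the three outer vertices $2$, $4$, $6$ as short bars in three pairwise disjoint narrow $x$-columns, each at the vertical height strictly between its two triangle-neighbours. After trimming the $x$-ranges of bars $1$ and $5$ on appropriate sides of the drawing, the two ``long-range'' 1-visibilities $\{1,6\}$ and $\{5,1\}$ can each be realised through exactly one intervening bar, whereas every non-edge ($\{1,4\}$, $\{2,4\}$, $\{2,5\}$, $\{2,6\}$, $\{3,6\}$, $\{4,6\}$) is blocked by at least two bars in its $x$-column.

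For the second claim, the plan is a proof by contradiction. Assume $G$ is a $1$-flow network with $G^{2}=S_3$. The core tool is the following \emph{cross-product constraint}: for every vertex $v$ of $G$, denoting by $A_v$ the set of direct $G$-predecessors and by $B_v$ the set of direct $G$-successors of $v$, one has $A_v, B_v \subseteq N_{S_3}(v)$, and every pair $(a,b)\in A_v\times B_v$ must be an edge of $S_3$; otherwise the length-two path $a\to v\to b$ in $G$ inserts the spurious edge $\{a,b\}$ into $G^2$. Applied at each outer vertex $w\in\{2,4,6\}$, this forces $A_w\cup B_w\subseteq N_{S_3}(w)$, a set of size two. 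Applied at an inner vertex, say $1$, the subgraph of $S_3$ induced on $N_{S_3}(1)=\{2,3,5,6\}$ is the path $2$-$3$-$5$-$6$, so the allowed cross-products lie in the three edges of this path, forcing $|A_1|\leq 1$ or $|B_1|\leq 1$ (and symmetrically at $3$ and $5$).

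I would then enumerate cases according to how the inner triangle $\{1,3,5\}$ is realised in $G$: how many of its three edges appear directly in $G$ versus only via a length-two path through an outer vertex (which, by the constraint above, must be the \emph{unique} outer vertex adjacent to both endpoints: $2$ for $\{1,3\}$, $4$ for $\{3,5\}$, $6$ for $\{1,5\}$). Using the three-fold rotational symmetry of $S_3$ to collapse isomorphic cases, I reduce to a small number of representatives; in each I examine how an outer vertex (usually vertex $4$) can be attached. Any orientation of the edges incident to $4$ forces a length-two path of the shape $1\to 3\to 4$, $4\to 3\to 2$, $6\to 5\to 4$, or $4\to 5\to 6$, contributing one of the forbidden $G^2$-edges $\{1,4\}$, $\{2,4\}$, $\{4,6\}$; or it forces a directed cycle in $G$; or it yields a degree pattern violating the $1$-flow condition. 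The case in which no triangle edge is direct is the most explicit: the three length-two paths through $2$, $4$, $6$ admit eight orientation combinations, two of which immediately produce a directed $6$-cycle in $G$, while the remaining six each force a bad adjacent pair such as $2\to 3\to 4$ on the outer hexagon. The main obstacle I anticipate is not conceptual but the bookkeeping: the argument is elementary yet requires care to exhaust all symmetry-reduced sub-cases cleanly.
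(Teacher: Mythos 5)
There is a genuine gap in the first half of your argument: the bar layout you describe for the StB1 representation cannot be made \emph{strong}. In your layout the three triangle bars $1,3,5$ sit at three distinct heights, so one of them, say $m$, is the middle one, and each outer vertex sits in its own narrow column. Consider the outer vertex $w_0$ that is \emph{not} adjacent to $m$ (if $m=3$ this is vertex $6$, adjacent to $1$ and $5$). If bar $m$ intersects $w_0$'s column at all, then wherever $w_0$ is placed vertically it becomes $1$-visible to $m$ (directly if $w_0$ lies between the top and bottom long bars, and through exactly one long bar if it lies above or below them), realizing a non-edge. The same reasoning applied to vertices $2$ and $4$ shows that bar $5$ must be entirely absent from $2$'s column and bar $1$ entirely absent from $4$'s column. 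So each of the three long bars must skip exactly one of the three columns while covering the other two; whichever column is in the middle of the left-to-right order, the bar that must skip it cannot be a single horizontal segment. Hence no layout of the shape you describe works, and end-trimming of bars $1$ and $5$ does not fix this. A correct representation needs a different structure; for instance a single region stacking $2,1,3,5$ bottom to top realizes exactly the five edges of the $K_4$-minus-an-edge on $\{1,2,3,5\}$ (the missing pair $\{2,5\}$ is blocked by two bars), and vertices $4$ and $6$ can then be attached in side regions into which only $\{3,5\}$, respectively only $\{1,5\}$, extend. This is essentially what the paper's figure exhibits.

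The second half of your proposal rests on the same core observation as the paper's proof (every directed length-two path of $G$ must project to an edge of $S_3$, and acyclicity/upwardness forbids directed cycles), but you organize the cases differently: by how many of the triangle edges $\{1,3\},\{3,5\},\{1,5\}$ are present directly in $G$, whereas the paper first splits on whether all six hexagon edges are present and then on the lengths of the directed paths along the hexagon. Your zero-direct-triangle-edges case is worked out correctly and matches the paper's ``alternating orientation'' subcase. However, the cases with one, two, or three direct triangle edges are only asserted to fail (``forces a bad path, a directed cycle, or a $1$-flow violation'') and not actually carried out; note in particular that the all-three-direct case is not immediately contradictory (an acyclic orientation of the triangle only produces the edge $\{1,5\}$, which is legal), so genuine further analysis of how $2$, $4$, $6$ attach is required there. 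As it stands this half is a plausible outline rather than a proof.
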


\begin{proof}

\begin{figure}
\centering
\subfigure[]{\label{fig:s_3}\includegraphics{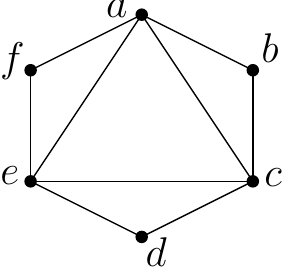}}
\hspace{2cm}
\subfigure[]{\label{fig:example_s_3}\includegraphics{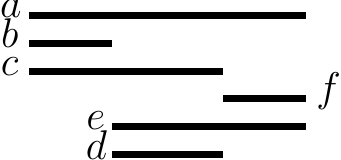}}
\caption{(a) Graph $S_3$ of Lemma~\ref{lemma_s3}.  (b) StB1 representation of $S_3$.}
\label{fig:s3}
\end{figure}

A StB1 representation of $S_3$ is shown in
Figure~\ref{fig:s3}(b). In the following we show that there
exists no 1-flow network $G$, such that $G^2=S_3$. We denote by $ab$ the
undirected edge between $a$ and $b$, and by $(a,b)$ the directed
edge from $a$ to $b$.
For the sake of contradiction assume such $G$ exists. We first assume
that $G$ does not contain all the edges of the external face of
$S_3$. Without loss of generality assume that $ab$ is not in $G$. Then
both $bc$ and $ac$ must be in $G$. Moreover they must be similarly
directed. Assume that they are directed as $(b,c)$ and $(c,a)$
($(c,b)$ and $(a,c)$, respectively). Then edge $dc$ is not in $G$,
since $(d,c)$ would induce $(d,a)$ (resp. $(d,b)$) in $G^2$,
while $(c,d)$ would induce $(b,d)$ (resp. $(a,d)$).  Thus both edges
$ec$ and $ed$ must be in $G$. Edge $ec$ must be oriented as
$(e,c)$ (resp. $(c,e)$), otherwise edge $(b,e)$ (resp. $(e,b)$) is in
$G^2$. Thus, $(d,e) \in G$ (resp. $(e,d) \in G$). Similarly, we
conclude that $(a,e) \in G$ (resp. $(e,a) \in G$), and therefore we
get a cycle $ace$ in $G$, which is a contradiction to the upward
condition of 1-flow networks.

Now, assume that $G$ contains all the edges of the outer face. We
distinguish cases based on the length of the directed paths
contained in the outer face.
If the longest path has length one then none of the edges
$ae$, $ac$, $ec$ are induced in $G^2$ by outer edge paths, and so at
least one must be in $G$.
But, any orientation of this edge creates
an additional edge in $G^2$, which does not belong to $S_3$.

If there exists a path of length three we get a contradiction, since
one of its length two subpaths induces an edge not in $S_3$.

Assume there exists a single path of length two, and no path of length
three. Then the middle vertex of the path must be $b$, $d$, or
$f$,
otherwise the path induces an edge not in $S_3$. Without loss of
generality assume that the path is $(a,b,c)$. Then $fa$ is
oriented as $(a,f)$ and $dc$ as  $(d,c)$. Any orientation of $fe$ and
$ed$ either introduces a path of length three (above case) or two
paths of length two (the next case).

Finally, assume there are two paths of length two. They must share a
vertex, otherwise one of them induces an edge not in $S_3$, and they
must be oriented opposite, otherwise a path of length three exists.
Without loss of generality we can assume that they are either paths
$(e,f,a)$ and $(c,b,a)$,  or paths $(a,f,e)$ and
$(a,b,c)$.  In case of $(e,f,a)$ and $(c,b,a)$,
edges $ed$ and $cd$ must be oriented as $(e,d)$ and $(c,d)$. Thus edge
$ec$ must be in $G$. But any orientation of $ec$ induces an edge in
$G^2$ that is not in $S_3$. Similarly with paths $(a,f,e)$ and
$(a,b,c)$.

\end{proof}

\section{Conclusion and Open Problems}

In this paper we investigated the relation of bar $1$-visibility graphs with other classes of graphs that are ``close to planar'', by proving that: (i) All $1$-planar graphs are WeB1, $(ii)$ All WeB1 graphs are quasi-planar, and finally that $(iii)$ All $1$-flow$^2$ graphs are WeB1, however not all $2$-flow$^2$ graphs are WeB1.
While these results provide some insight on the class of bar $1$-visibility graphs it would be interesting to provide  
a complete characterization of WeB1 or StB1 graphs.
Regarding the relation of WeB1 and $k$-flow$^2$ graphs, what can we say about the squares of planar digraphs, where for each
vertex $v$, either $\min\{\indeg(v), \outdeg(v)\}=1$, or $\indeg(v) =
\outdeg(v) =2$ (except for $v =s,t$)?


\bibliography{gdbiblio,extra,extra1}

\bibliographystyle{abbrv}

\newpage
\appendix



\snip{
\section{Excluded Figures}

Figure~\ref{fig:K7_is_1flow2} shows that $K_n$ ($n \leq 7$) is a
$1$-flow$^2$ graph.
\begin{figure}
\centering
\includegraphics{K7_is_1flow2.pdf}
\caption{$K_n$ is the square of the induced 1-flow network on
  $n \leq 7$ vertices.  Edges are upward.}
\label{fig:K7_is_1flow2}
\end{figure}

Figure \ref{fig:k3m} illustrates that WeB1 graphs include a large
class of non-planar graphs, for example, $K_{3,m}$ for $m \geq 3$,
that are not StB1.

\begin{figure}
\begin{center}
\includegraphics{k3m.pdf}
\end{center}
\caption{A weak bar 1-visibility representation for $K_{3,m}$.}
\label{fig:k3m}
\end{figure}
}




\snip{
\section{StB1 graphs as extremal interval graphs} \todo{Rev1: This section is really short and cannot be understood if the reader does not already know these graph concepts.}

If the maximum number of bars intersected by a vertical line is at
most $k+2$ in a strong bar $k$-visibility layout for $G$ then the bar
layout (after vertical translation) is an interval graph
representation for $G$, and vice-versa.
This connection allows us to classify StB1 graphs that have bounded
clique number using classification results for extremal interval
graphs~\cite{Eckhoff93}.
An interval graph with $n$ vertices is \emph{$r$-extremal} if it has
clique number $r$ and
$
\binom{r}{2} + (n-r)(r-1)
$
edges.
This means that the number of intervals intersected
by a vertical line is sometimes $r$ and never less than $r-1$.

An \emph{$r$-tree} is $K_r$ or an $r$-tree $G$ plus a new
vertex that connects to a $r$-clique in $G$.
An \emph{$r$-leaf} is a vertex of degree $r$ in an $r$-tree.
An \emph{$r$-path} is $K_r$ or $K_{r+1}$ or an $r$-tree with exactly
two $r$-leaves.
An \emph{$r$-caterpillar} is $K_r$ or $K_{r+1}$ or an $r$-tree in which
the deletion of all $r$-leaves results in an $r$-path.
An \emph{interior $r$-caterpillar} is an $r$-caterpillar that contains
an $r$-path such that the $r$-leaves of the caterpillar are $r$-leaves of
the path or are attached to interior $r$-cliques of the path.
Note that a $1$-tree is a tree and a $1$-caterpillar is a caterpillar.

\begin{theorem}
\label{thm:cat}
\ 
\begin{enumerate}
\item $G$ is a connected StB1 graph with clique number $2$ if and only if
$G$ is a caterpillar.
\item $G$ is a biconnected StB1 graph with clique number $3$ if and only if
$G$ is an interior 2-caterpillar.
\end{enumerate}
\end{theorem}
\begin{proof}
Connected StB1 graphs with clique
number $2$ are exactly $2$-ex\-tre\-mal interval graphs,
and biconnected StB1 graphs with clique number $3$ are exactly
$3$-extremal interval graphs.
The results follow since $r$-extremal interval graphs are
interior $(r-1)$-caterpillars~\cite{Eckhoff93}.
\end{proof}

For $r > k+2$, the connection between $r$-extremal interval graphs and
$(r-1)$-connected strong bar $k$-visibility graphs breaks down, so
this technique does not extend to characterizing all StB1 graphs.
}

\snip{
\section{Limiting bar stack height}

A useful tool for analyzing bar layouts is to imagine a vertical line sweeping across the layout from left to right and considering
the number of lines that are active at any stage of the sweep.
In the following subsections, we consider the cases that at most 2, 3, or 4 lines are active.

\subsection{Triangle-free StB1 Visibility Graphs}
In this section we characterize all triangle-free StB1 visibility graphs and show they are precisely the class of caterpillars.
A (bar-1) visibility representation $\Gamma$ is said to have \emph{stabbing number} $k$ if any vertical line through $\Gamma$ crosses at most $k$ bars and there exists a line that crosses exactly $k$ bars. We also assume that any vertical line through $\Gamma$ crosses at least one bar, since otherwise the representation can be separated by such vertical lines, and each of the separated parts would represent a connected component of the graph.

\begin{theorem}
\label{thm:caterpillar}
The following statements are equivalent:
\begin{enumerate}[(i)]
\item $G$ is a triangle-free graph that has an StB1 representation.
\item $G$ has an StB1 representation with stabbing number $2$.
\item $G$ is a caterpillar.
\end{enumerate}
\end{theorem}
\begin{proof}~\\
\begin{enumerate}
\item[$(i) \Rightarrow (ii)$] Let $G$ be a triangle-free graph with StB1 representation $\Gamma$.
The assumption that $\Gamma$ has stabbing number larger or equal than $3$ leads to the fact that $G$ contains a triangle, which is a contradiction.
\item[$(ii) \Rightarrow (iii)$] Let $G$ be a graph and let $\Gamma$ be a StB1 representation of $G$ with stabbing number $2$. Let $\ell(v)$, $r(v)$ denote the coordinates of the leftmost and the rightmost points of bar representing $v$, respectively. Let $v_1,\dots,v_n$ be the vertices of $G$, so that $\ell(v_1) \leq \dots \leq \ell(v_n)$. We show that the graph $G_i$, induced by vertices $v_1,\dots,v_i$, is a caterpillar by induction on $i$.

\begin{figure}
\centering
\subfigure[]{\label{fig:cat1}\includegraphics[scale=1]{cat1.pdf}}
\hfill
\subfigure[]{\label{fig:cat2}\includegraphics[scale=1]{cat2.pdf}}
\caption{Illustration for the proof of Theorem~\ref{thm:caterpillar}. Above the case where $r(v_{k}) < r(v_{i+1})$, below the case where $r(v_{k}) > r(v_{i+1})$. (a) The edges of the spine are bold red. The inductively added edge is dashed.}
\label{fig:cat_algorithm}
\end{figure}

For $i=2$, there exists a vertical line stabbing $v_1$ and $v_2$, since
otherwise $G$ is disconnected, i.e.,  $G_2$ is an edge.

Assume that $G_i$ is a caterpillar. Assume additionally that the one of the end-vertices of the spine of $G_i$ is vertex $v_k$, such that $r(v_k)=\max{r(v_1),\dots,r(v_i)}$ (refer to Figure~\ref{fig:cat_algorithm}).
Notice that there exist a vertical line through $v_{i+1}$ and $v_k$, since otherwise the graph $G$ is disconnected. Thus edge $(v_{i+1},v_k)$ exists in $G_{i+1}$. Moreover, no vertical line though $v_{i+1}$ crosses any other vertex of $G_i$. Therefore $G_{i+1}$ is a tree. Since $v_k$ is vertex of the spine, $v_{i+1}$ is either a new vertex of the spine or a leg of the caterpillar $G_{i+1}$.
In case $r(v_{i+1}) < r(v_k)$, vertex $v_{i+1}$ is a leg of the caterpillar $G_{i+1}$ (Figure~\ref{fig:cat_algorithm} below) and the induction hypothesis holds, since $v_i$ is still one of the end-vertices of the spine of $G_{i+1}$. In case $r(v_{i+1}) = r(v_k)$, we have that $i+1 = n$, since no more bars can follow, as by assumption $G$ is connected. If $r(v_{i+1}) > r(v_k)$ (Figure~\ref{fig:cat_algorithm} above),  vertex $v_{i+1}$ is the new end-vertex of the spine and the induction-hypothesis holds again.

\begin{figure}
\begin{center}
\includegraphics{cat3.pdf}
\end{center}
\caption{A caterpillar has a StB1 representation.}
\label{fig:cat3}
\end{figure}

\item[$(iii) \Rightarrow (i)$] Since $G$ is a caterpillar, it is triangle-free. Let $v_1,\dots,v_k$ be the vertices of the spine of $G$ in the order they appear in the spine. Let $v_1^i,\dots,v^i_{k_i}$ denote the legs of $G$ adjacent to $v_i$. It is trivial to see that an StB1 representation of $G$ can be constructed as in Figure~\ref{fig:cat3}.

\end{enumerate}
\end{proof}

\subsection{StB1 graph with stabbing number 3}
Three line StB1 graphs are
fat caterpillars.

\subsection{StB1 graphs on 4 lines}

The StB1 visibility graphs that can be represented  on 4 lines  are more difficult to describe.
}

\end{document}